\newtheorem{theorem}{Theorem}
\newtheorem{lemma}{Lemma}
\newtheorem{definition}{Definition}
\newcommand{\comment}[1]{{\color{red} #1}}
\title{The $(3,1)$-ordering for 4-connected planar triangulations}
\author{Therese Biedl
\thanks{David R.~Cheriton School of Computer Science, 
University of Waterloo, 
Waterloo, ON N2L 3G1, Canada, {\tt \{biedl,mderka\}@uwaterloo.ca}. Research supported by NSERC. The second author is supported by Vanier CGS. }
\and
Martin Derka
\addtocounter{footnote}{-1}
\footnotemark
}
\date{\today}
\begin{document}
\maketitle

\begin{abstract}
Canonical orderings of planar graphs have frequently been used in graph drawing and other graph algorithms. In this paper we introduce the notion of an
$(r,s)$-canonical order, which unifies many of the existing variants of
canonical orderings.  We then show that $(3,1)$-canonical ordering
for 4-connected triangulations always exist; to our knowledge
this variant of canonical ordering was not previously known.
We use it to give much simpler proofs of two previously known graph drawing 
results for 4-connected planar triangulations, namely, rectangular duals 
and rectangle-of-influence drawings.
\end{abstract}

%%%%%%%%%%%%%%%%%%%%%%%%%%%%%%%%%%%%%%%%%%%%%%%%%%%%%%%%%%%%%%%%%%%%%
\section{Background}
\label{se:intro}
%%%%%%%%%%%%%%%%%%%%%%%%%%%%%%%%%%%%%%%%%%%%%%%%%%%%%%%%%%%%%%%%%%%%%

A canonical ordering of a planar graph is a way of building the graph by iteratively attaching vertices to some ``basic graph'' (such as an edge) while preserving some connectivity invariant after each iteration. This concept was introduced in the late 1980's by de Fraysseix, Pach and Pollack~\cite{FPP90}. They used the canonical ordering to show that planar graphs can be drawn on a grid of size $(2n-4) \times (n-2)$. Subsequently, canonical orderings became one of the main tools in graph drawings, e.g.~for drawing graphs in grids of small dimensions (see e.g. \cite{FPP90,CN98}), rectangular duals~\cite{KH97}, and also graph algorithms such as encoding planar graphs \cite{HKL99} or finding $k$-disjoint trees in planar graphs~\cite{NRN97,NN00}. 
%Canonical orderings are also related to $st$-orderings, regular edge labellings and Schnyder woods \todo{add a few more citations here}.

\paragraph{Our contribution}
There is now a number of variations of canonical orderings, depending on
the connectivity of the graph and whether it is triangulated or not.  (We
will review these below.)
In this paper, we show the existence yet another canonical ordering, this one  for planar 4-connected triangulations.  It is substantially different from the canonical ordering for such graphs that was presented by Kant and He~\cite{KH97}. 
We call this the $(3,1)$-canonical ordering.  More generally, we introduce
the concept of an $(r,s)$-canonical ordering, which (roughly speaking) means
that the partial graph must be $r$-connected and the rest-graph must be
$s$-connected; the existing canonical orders all fit into this framework.

We use the $(3,1)$-canonical ordering to provide alternate (and, in our opinion, significantly simpler) proofs of two previously known results about 4-connected planar triangulations: they have rectangular duals (Section~\ref{sec:rd}) and rectangle-of-influence drawings (Section~\ref{sec:ri-drawing}).

\section{Review of existing canonical orderings}

We assume that the reader is familiar with planar graphs (refer e.g.~to \cite{Die12}).  We use the
term {\em triangulation} for a maximal planar simple graph, i.e., a graph in which all faces
are triangles and which has $3n-6$ edges of which none is a multiple edge or a loop.  Such a graph has a unique planar embedding; we further
assume that one face has been fixed as the outer face.
We begin our review of canonical ordering with the
one for triangulations introduced by de Fraysseix et al.~\cite{FPP90}. 
We paraphrase their definition %a bit 
to the following one (which is easily shown to be equivalent):

\begin{definition}[Canonical ordering for triangulations~\cite{FPP90}]
\label{def:triangulated-co}
Let $G$ be a triangulation with outer face $u_1, u_2, u_3$. A vertex ordering
$v_1,\dots,v_n$ is called a \emph{canonical ordering} if 
\begin{itemize}
\item $v_1=u_1$, $v_2=u_2$, $v_n=u_3$,
\item For every $1 < k < n$
the subgraph $G_{k}$ of $G$ induced by vertices $v_1,v_2, \ldots, v_{k}$ is $2$-connected. 
\end{itemize}
\end{definition}

As we will see later, it will be convenient to define $V_k:=\{v_k\}$ and so $V_1\cup \dots\cup V_n$ becomes a partition
of the vertex set.  For any such partition and an index $k$, we use the notation $G_k$ for the subgraph induced by 
$V_1\cup \dots \cup V_k$ and we let the {\em complement}
$\overline{G_k}$ of $G_k$ be the subgraph induced by the vertices $V-(V_1\cup \dots \cup V_{k-1})$. 
Note that vertex set $V_k$ belongs to both $G_k$ and $\overline{G_k}$.
\footnote{Some references instead define $\overline{G_k}$ to be the subgraph induced by
$V-(V_1\cup \dots \cup V_k)$.  This complicates stating some of the conditions.}  

One can observe that in a canonical ordering for a triangulation, the complement $\overline{G_k}$ is a connected graph for all $k<n$. 
This holds because any vertex $v_k\neq u_1,u_2,u_3$ is not on the outer face and so there must exist some minimal $k'>k$ where $v_k$ is
not on the outer face of $G_{k'}$.  Due to the triangular faces, $v_k$ receives an edge to $v_{k'}$, and iterating the
argument, hence has a path within $\overline{G_k}$ that leads to $v_n$.

We note here, without giving details, that this canonical ordering
has been generalized to 3-connected planar graphs that are not necessarily
triangulated \cite{Kant96}, and also to non-planar
3-connected graphs (see \cite{Schmidt-ICALP14} and the references therein).

In 1997, Kant and He~\cite{KH97} showed that one can define a different canonical ordering for 4-connected triangulations, and used it to construct visibility representations of 4-connected planar graphs. 
Its definition, slightly paraphrased, is as follows:

\begin{definition}[Canonical ordering for 4-connected triangulations~\cite{KH97}]
\label{def:22-ordering}
Let $G$ be a 4-connected triangulation with outer face $u_1, u_2, u_3$. A vertex order
$v_1,\dots,v_n$ is called a \emph{canonical ordering for 4-connected triangulations} if 
\begin{itemize}
\item $v_1=u_1$, $v_2=u_2$, $v_n=u_3$,
\item For every $1 < k < n$, graphs $G_{k}$ and $\overline{G_k}$ are $2$-connected. 
\end{itemize}
\end{definition}

This canonical ordering was extended to a canonical ordering for all planar 4-connected graphs (not necessarily triangulated) by Nakano, Rahman and Nishizeki~\cite{NRN97}. 
%Miura et al. used this canonical ordering to show that every 4-connected planar graph has 4 independent spanning trees~\cite{???}. 
Versions of a canonical order for 4-connected non-planar graphs are also known \cite{CLY05}.

Going one higher in connectivity,
Nagai and Nakano \cite{NN00} introduced a canonical ordering
for 5-connected triangulations.  Here, vertices are added in sets that
are sometimes more than a singleton.  We need a definition.
Let $G$ be a graph where all interior faces are triangles.  A {\em fan} 
of $G$ is a subset of
vertices $z_1,\dots,z_f$ that induces a path with
$\deg(z_i)=3$ for all $i=1,\dots,f$.  We will only apply this
concept if all vertices in the fan belong to the outer face of $G$.
Since interior faces are triangles, it follows that for all $z_i$
the third neighbor (i.e., the one not on the outer face) is the
same vertex.   See also Figure~\ref{fig:fan}(right).

\begin{definition}[Canonical ordering for 5-connected triangulations~\cite{NN00}]
Let $G$ be a 5-connected triangulation with outer face $u_1, u_2, u_3$. A partition of the vertices $V=V_1\cup \dots \cup V_L$ is called a \emph{canonical ordering for 5-connected triangulations} if 
\begin{itemize}
\item $V_1=\{u_1,u_2\}$,
\item $V_2$ consists of all neighbors of $u_1$ and $u_2$,
\item $V_L=\{u_3\}$,
\item $V_{L-1}$ consists of all neighbors of $u_3$,
\item For $2< k<  L-1$, vertex set $V_k$ is either a single vertex or a fan,
\item For every $2 < k < L$, graph $G_{k}$ is $3$-connected and
graph $\overline{G_{k}}$ is $2$-connected. 
\end{itemize}
\end{definition}

This canonical ordering was used to find 5 independent spanning trees in 5-connected triangulations~\cite{NN00}. To our knowledge, it has not been generalized to planar 5-connected (not necessarily triangulated) graphs, and not to non-planar 5-connected graphs either.    Since no planar graph is 6-connected, no canonical orderings for higher connectivity can exist for planar graphs.

Note that the three canonical orderings listed here are very similar, with the essence being the
connectivity that is required of the subgraphs and their complements.
In light of this, we 
aim to unify the three definitions with the following:

\begin{definition}[$(r,s)$-canonical ordering]
Let $G$ be a triangulation with outer-face $\{u_1,u_2,u_3\}$.
We say that 
a vertex partition $V_1 \cup \ldots \cup V_L$ 
is an \emph{$(r,s)$-canonical ordering} if 
\begin{itemize}
\item $u_1$ belongs to $V_1$ and $u_3$ belongs to $V_L$, and
\item for every $1 < k < L$, graph $G_k$ is $r$-connected and $\overline{G_k}$ is $s$-connected.
\end{itemize}
\end{definition}

Note that this definition is deliberately vague on the exact form that the vertex sets $V_k$ must have.
In particular, nothing prevents us (yet) from setting $L=1$ and $V_1=V$, which satisfies all conditions.
The existing canonical orderings restrict $V_k$ to be a singleton or, for $5$-connected triangulations, fans.
Thus the above definition should be viewed as a class of definitions, to be refined further by
stating restrictions on the vertex sets $V_k$.

Rephrasing the existing canonical orders in the above terms,
the canonical order for triangulations becomes a $(2,1)$-canonical ordering with only singletons, the one for 4-connected triangulations becomes a $(2,2)$-canonical ordering with only singletons, and the one for 5-connected triangulations becomes a $(3,2)$-canonical ordering with only singletons or fans. The reader will notice that the sum of the two numbers equals the connectivity of the graph.  Pushing this further, one may ask whether any $(r{+}s)$-connected
triangulation has an $(r,s)$-canonical ordering such that each $V_k$ has some simple form.
Note that we may assume that $r\geq s$, since a reversal of an $(r,s)$-canonical ordering gives an $(s,r)$-canonical ordering.
We study here $(3,1)$-canonical ordering for 4-connected triangulations, under the restriction that each $V_k$ is a singleton or a fan.
To our knowledge no such ordering was known before.

%%%%%%%%%%%%%%%%%%%%%%%%%%%%%%%%%%%%%%%%%%%%%%%%%%%%%%%%%%%%%%%%%%%%%%%%
\section{$(3,1)$-canonical orderings}
%%%%%%%%%%%%%%%%%%%%%%%%%%%%%%%%%%%%%%%%%%%%%%%%%%%%%%%%%%%%%%%%%%%%%%%%

We have already given the broad idea of a $(3,1)$-canonical ordering
earlier.  We re-state it here and give the specific
restrictions imposed  on the vertex sets.  See also`Figure~\ref{fig:fan}.

\begin{figure}[ht]
\hspace*{\fill}
\includegraphics[width=40mm,page=1]{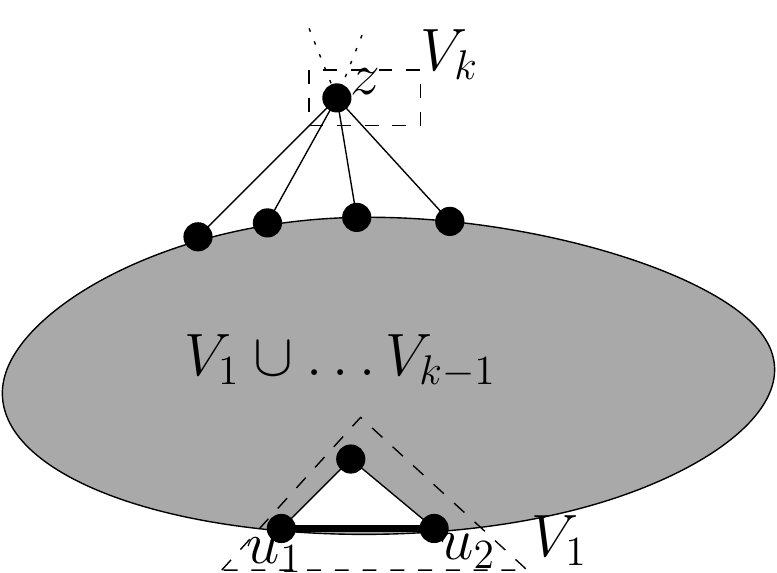}
\hspace*{\fill}
\includegraphics[width=40mm,page=2]{31-canonical.pdf}
\hspace*{\fill}
\caption{A singleton $V_k$ and a fan $V_k$ in a $(3,1)$-canonical
ordering.}
\label{fig:fan}
\end{figure}

\begin{definition}
Let $G$ be a 4-connected triangulation with outer-face $\{u_1,u_2,u_3\}$.
A {\em $(3,1)$-canonical order with singletons and fans} is a partition $V=V_1\cup \dots \cup V_L$
such that
\begin{itemize}
\item $V_1=\{u_1,u_2,z\}$, where $z$ is the third vertex of the
	interior face adjacent to $(u_1,u_2)$.
\item $V_L=\{u_3\}$.
\item For any $1<k<L$, set $V_k$ is either a singleton or a fan.
\item For any $1<k<L$, graph $G_k$ is $3$-connected 
	and $\overline{G_k}$ is connected.
\end{itemize}
\end{definition}

In what follows, we will omit the ``with singletons and fans'', as we
will not study any other version of $(3,1)$-canonical orderings.
Our main goal is to show that every 4-connected triangulation
has such a $(3,1)$-canonical ordering.  The proof of this proceeds
by induction, and we state the crucial lemma for the induction
step separately first.  We need a few definitions.

A plane graph is called a {\em triangulated disk} if every interior
face is a triangle and the outer-face is a simple cycle.
A triangulated disk is called {\em internally 4-connected} if its
outer-face has no chord, and every triangle is a face.
Observe that a triangle is an internally 4-connected triangulated disk,
and so is any 4-connected triangulation.  Also observe that a subgraph
of an internally 4-connected triangulated disk is again an internally
4-connected triangulated disk if and only if its outer-face is a simple
cycle that has no chord.

\begin{lemma}
\label{lem:find}
Let $G$ be an internally $4$-connected triangulated disk with $n\geq 4$.  
Let $(u_1,u_2)$ be an edge on the outer-face. 
Then there exists a vertex set $V'$ such that
\begin{itemize}
\item $V'$ contains only outer-face vertices, and none of $u_1,u_2$.
\item $G-V'$ is an internally $4$-connected triangulated disk.
\item $V'$ is a singleton or a fan.
\end{itemize}
\end{lemma}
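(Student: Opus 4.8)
The plan is to find the set $V'$ by examining the outer face of $G$ and removing a carefully chosen vertex (or fan of vertices) so that the resulting graph still has a chordless simple cycle as its outer boundary and no separating triangles. Write the outer face as a cycle $C = w_1, w_2, \dots, w_\ell$ with $(u_1,u_2) = (w_1,w_2)$. The natural candidates for $V'$ are single vertices $w_i$ with $3 \le i \le \ell$ (so that we never delete $u_1$ or $u_2$), or maximal fans among such vertices. When we delete $w_i$, its neighbors on $C$ become joined through the interior triangles incident to $w_i$, and the new outer face is obtained by replacing $w_i$ by its interior neighbors (in order). For the result to again be an internally $4$-connected triangulated disk we need two things: (i) the new outer boundary is still a \emph{simple} cycle, i.e. no interior neighbor of $w_i$ already lies on $C$ other than $w_{i-1}, w_{i+1}$ — equivalently $w_i$ is not a ``cut-like'' vertex; and (ii) the new outer boundary has no chord, and $G - V'$ has no separating triangle.

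The first key step is a counting/structural argument to show some such $w_i$ exists at all. I expect to argue as follows. If $w_i$ has degree $3$, deleting it merges the two boundary edges at $w_i$ into a path of length $2$ through its unique interior neighbor; a maximal run of consecutive degree-$3$ boundary vertices all sharing the same interior neighbor is exactly a fan, and deleting a whole such fan replaces it by that single common neighbor. If instead $w_i$ has degree $\ge 4$, deleting it exposes $\deg(w_i) - 2 \ge 2$ interior neighbors. So the real work is: (a) show that we can pick $w_i$ (or a fan) so that deletion does not create a chord on the new outer face, and (b) show it does not create a separating triangle. For (b), note that a triangle $T$ of $G - V'$ that fails to be a face must have had $w_i$ (or a fan vertex) strictly inside it in $G$ — but then $T$ together with that vertex, plus the fact that $T$'s vertices are now on the outer face, contradicts internal $4$-connectivity of $G$ unless $T$ was already a face of $G$; this needs to be spelled out using that interior faces of $G$ are triangles and the outer face of $G$ is chordless. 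For (a), a chord of the new outer face that did not exist before must be an edge between two interior neighbors of $w_i$, or between an interior neighbor of $w_i$ and some $w_j$; each such configuration yields a separating triangle or a chord of $C$ in $G$, again contradicting internal $4$-connectivity — \emph{provided} we chose $w_i$ well.

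The main obstacle — and the heart of the lemma — is showing that a \emph{good} choice of $w_i$ (or fan) always exists, i.e. ruling out the bad case where \emph{every} boundary vertex other than $u_1, u_2$ creates a chord or a separating triangle upon deletion. The plan is to suppose for contradiction that no vertex $w_i$, $3 \le i \le \ell$, is deletable and no fan is deletable, and derive that $G$ has a separating triangle or a chord on $C$ or a vertex of degree $\le 3$ that is not part of an admissible fan, in each case contradicting that $G$ is an internally $4$-connected triangulated disk with $n \ge 4$. Concretely I would look at a boundary vertex $w_i$ of minimum degree among $w_3,\dots,w_\ell$: if its degree is $3$, I take the maximal fan containing it and argue that fan is deletable (its two ``end'' interior-neighbors are forced to be distinct boundary-adjacent vertices, and any new chord would be a chord of $C$ or a separating triangle in $G$); if every such $w_i$ has degree $\ge 4$, I use a discharging or extremal argument on the ``pockets'' cut off by potential chords to locate a region small enough that internal $4$-connectivity forces a deletable vertex inside it. I would also handle the small base cases ($n=4$, i.e. $G = K_4$, and the case $\ell = 3$) separately and directly. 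Throughout, the repeated use of ``internally $4$-connected'' = ``no chord on the outer face and every triangle is a face'' is what converts each bad configuration into an explicit contradiction.
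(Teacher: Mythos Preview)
Your verification conditions are right: after deleting $V'$ you must check that the new outer boundary is a simple chordless cycle and that no triangle becomes separating. Your argument for (b) is fine. The gap is in how you \emph{choose} $V'$.

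The ``take a boundary vertex of minimum degree, and if it has degree $3$ take its maximal fan'' heuristic does not work. Concretely: suppose $c_3$ has degree $3$ with interior neighbour $x$, so $x$ is adjacent to $c_2,c_3,c_4$; suppose $\deg(c_4)\geq 4$ so the maximal fan is just $\{c_3\}$; and suppose $x$ is also adjacent to some $c_h$ with $h\geq 6$. All of this is compatible with internal $4$-connectivity (the triangle $c_4,x,c_5$ need not exist since $(x,c_5)$ need not be an edge). Deleting $\{c_3\}$ then exposes $x$ on the outer face and $(x,c_h)$ becomes a chord. So your first branch can fail, and you are thrown back onto the vague ``discharging or extremal argument on pockets'', which as stated is not a proof.

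The paper's proof supplies exactly the missing extremal object. Call a path $c_i{-}x{-}c_j$ with $i<j{-}1$ and $x$ interior a \emph{$2$-leg}, and $x$ a \emph{$2$-leg-center}; say $x$ dominates $y$ if $y$ lies strictly inside the region bounded by some $2$-leg with center $x$. This relation is acyclic, so a \emph{minimal} $2$-leg-center $x$ exists. Now do the case split on $x$, not on vertex degrees: if every $2$-leg with center $x$ is ``basic'' (all outer-face vertices strictly between its ends have degree $3$), take the full fan between the extreme outer-face neighbours of $x$; otherwise take a single vertex $c_{i+1}$ from a maximal-$i$ complex $2$-leg $\{c_i,x,c_j\}$. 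Minimality of $x$ is precisely what kills every potential new chord: any interior neighbour of $c_{i+1}$ that also touched the outer face elsewhere would be a $2$-leg-center dominated by $x$. Your ``pocket'' intuition is the right picture, but it needs this dominance order on $2$-leg-centers to become an argument.
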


\begin{proof}
\footnote{The proof is strongly inspired of the one 
for a $(3,2)$-canonical order in 5-connected graphs \cite{NN00}.
Since we demand
less on our $(3,1)$-canonical order, we can simplify the exposition
somewhat.}
Enumerate the outer face vertices of $G$ as $u_1=c_1,c_2,\dots,c_\ell=u_2$ in
clockwise order.  Define a {\em 2-leg} to be a path $c_i-x-c_j$ where
$i<j-1$ and $x$ is not on the outer-face.
Vertex $x$ is called a {\em 2-leg-center}.  We always have at least one
2-leg (namely, the one consisting of $u_1=c_1,u_2=c_\ell$ and their common 
neighbor at the interior face incident to $(u_1,u_2)$; this vertex is 
interior since $G$ has no chord and at least 4 vertices).

We say that a 2-leg-center $x$ {\em dominates} a 2-leg-center $y$ if vertex
$y$ is strictly inside the cycle $x-c_i-c_{i+1}-\dots-c_j-x$ formed by 
some 2-leg $\{c_i,x,c_j\}$ with center-vertex $x$.  See also
Figure~\ref{fig:2leg}(left).
The dominance-relationship 
is acyclic since any 2-leg with center-vertex $y$ must enclose strictly 
fewer faces than the 2-leg $\{c_i,x,c_j\}$.  Therefore we must have some
{\em minimal} 2-leg-centers, which are the ones that do not dominate any
other 2-leg-center.

By definition for any 2-leg $\{c_i,x,c_j\}$, we have $j\geq i+2$ and so 
there exists at least one vertex between $c_i$ and $c_j$
on the outer-face.  
We say that a 2-leg $\{c_i,x,c_j\}$ is {\em basic} if the vertices
$c_{i+1},\dots,c_{j-1}$ all have degree 3, and {\em complex} otherwise.
Note that if $\{c_i,x,c_j\}$ is basic, then $c_{i+1},\dots,c_{j-1}$ form
a fan and their common neighbor is $x$.

\begin{figure}[ht]
\hspace*{\fill}
\includegraphics[width=40mm,page=1]{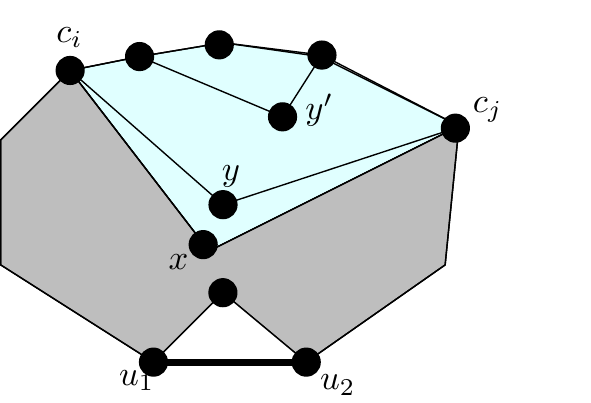}
\hspace*{\fill}
\includegraphics[width=40mm,page=6]{2leg.pdf}
\hspace*{\fill}
\includegraphics[width=40mm,page=4]{2leg.pdf}
\hspace*{\fill}
\caption{(Left) 2-leg center $x$ dominates both $y$ and $y'$.
(Middle)  If all 2-legs containing $x$ are basic, then we can remove a fan.
(Right) If $\{c_i,x,c_j\}$ is complex, then removing $c_{i+1}$ leaves
an internally 4-connected triangulated disk.}
\label{fig:2leg}
\end{figure}

Let $x$ be a minimal 2-leg center.
We have two cases:
\begin{itemize}
\item All 2-legs containing  $x$ are basic.

Let $i\geq 1$ be minimal and $j\leq \ell$ be maximal such that $x$ is adjacent
to $c_i$ and $c_j$.  
See also Figure~\ref{fig:2leg}(middle).
Since $x$ is a 2-leg-center, we have $i<j-1$.
By case assumption the 2-leg $\{c_i,x,c_j\}$ is basic, so 
$V'=\{c_{i+1},\dots,c_{j-1}\}$ is a fan. 
We verify that $G':=G-V'$ is an internally 4-connected triangulated disk:
\begin{itemize}
\item The outer-face of $G'$ consists of the one of $G$ plus $x$.
	By definition of a 2-center $x$ was not on the outer-face, so
	$G'$ is a triangulated disk.
\item Since $G$ had no chord, the only possible chord of $G'$ would be
	incident to vertex $x$.  But by choice of $i$ and $j$ the only
	neighbors of $x$ on the outer-face of $G'$ are $c_i$ and $c_j$.  So
	$G'$ has no chord.
\end{itemize}

\item Some 2-leg $\{c_i,x,c_j\}$ is complex.

We assume that $i$ has been chosen maximally, i.e., so 
that $\{c_{i+1},x,c_j\}$ is
either not a 2-leg or not complex.
We claim that in this case $V'=\{c_{i+1}\}$ is a suitable vertex set.

We first show that $c_{i+1}$ cannot be adjacent
to $x$.  Assume for contradiction that it is, then $\{c_i,x,c_{i+1}\}$ is
a triangle and hence a face.  If there were some $c_h$
with $i+1<h<j$ and $\deg(c_h)\geq 4$, then this would make
$\{c_{i+1},x,c_j\}$ a complex 2-leg, contradicting the choice of $i$.
So all of $c_{i+2},\dots,c_{j-1}$ (if any) have degree 3, and they
form a fan with common neighbor $x$.    In particular, edge $(c_{i+2},x)$
exists, which means triangle $\{c_{i+1},x,c_{i+2}\}$ is a face, forcing
$\deg(c_{i+1})=3$.  But then $\{c_i,x,c_j\}$ is basic, not complex.
This is a contradiction, so $x$ is not a neighbor of $c_{i+1}$.

Let $c_i=a_0,a_1,\dots,a_d,a_{d+1}=c_{i+2}$ be the neighbors of
$c_{i+1}$ in ccw order.  
See also Figure~\ref{fig:2leg}(right).
None of $a_1,\dots,a_d$ can be on the outer-face
of $G$, else $G$ would have a chord.  The outer-face of $G':=G-V'$
consists of $c_1,\dots,c_i,a_1,\dots,a_d,c_{i+1},\dots,c_\ell$, and so
this is a simple cycle and $G'$ is a triangulated disk.   Further, we
can show that it has no chord:
\begin{itemize}
\item If a chord of $G'$ connected two vertices in 
$c_1,\dots,c_i,c_{i+2},\dots,c_\ell$,
then it would also be a chord in $G$, which is excluded.
\item If a chord connected two non-consecutive vertices in
$c_i{=}a_0,\dots,a_{d+1}{=}c_{i+2}$, then in $G$ there would be an edge between two
non-consecutive neighbors of $c_{i+1}$, implying a triangle that is not a face.
\item If a chord connected some $a_s$, $1\leq s\leq d$, with some 
$c_h$, $i+2<h\leq j$, then $\{c_{i+1},a_s,c_h\}$ would be a 2-leg in $G$.
By minimality of $x$ hence $a_s=x$, but this contradicts that $c_{i+1}$ is
not adjacent to $x$.
\item If a chord connected some $a_s$, $1\leq s\leq d$, with some 
$c_h$, $1\leq h<i$ or $j<h\leq \ell$, then by $a_s\neq x$ it would have to cross $(c_i,x)$
or $(x,c_j)$, contradicting planarity.
\end{itemize}
So $G'$ is an internally 4-connected triangulated disk.  
\end{itemize}
Observe that in both cases
$V'\subseteq \{c_{i+1},\dots,c_{j-1}\}$
for some $1\leq i< j\leq \ell$, and so $V'$ does not contain $u_1$ or $u_2$
as desired.
\end{proof}

\begin{theorem}
Let $G$ be a 4-connected planar triangulation. 
Then $G$ has a $(3,1)$-canonical order.
\end{theorem}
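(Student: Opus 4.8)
The plan is to build the partition $V=V_1\cup\dots\cup V_L$ from the top down, peeling off one vertex set at a time using Lemma~\ref{lem:find}. Since $G$ is $4$-connected it has at least $6$ vertices, minimum degree at least $4$, and no separating triangle; in particular $G$ is itself an internally $4$-connected triangulated disk, its outer face being the triangle $u_1u_2u_3$. The first vertex I peel off is $u_3$: one checks directly that $G-u_3$ is again an internally $4$-connected triangulated disk. Its outer face is bounded by the edge $(u_1,u_2)$ together with the link of $u_3$, a path from $u_1$ to $u_2$; as the neighbours of $u_3$ are distinct this is a simple cycle, and the faces of $G$ formerly incident to $u_3$ merge into it, so every interior face remains a triangle. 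Any chord of this outer face, or any triangle of $G-u_3$ that is not a face, would (together with $u_3$, respectively by itself) be a separating triangle of $G$, which is impossible. So set $V_L:=\{u_3\}$.

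Starting from $D_1:=G-u_3$, I iterate: while the current disk $D_t$ has at least $4$ vertices, apply Lemma~\ref{lem:find} with the fixed edge $(u_1,u_2)$ to get a singleton or fan $V'_{t+1}$ of outer-face vertices of $D_t$ other than $u_1,u_2$, and set $D_{t+1}:=D_t-V'_{t+1}$, once more an internally $4$-connected triangulated disk. Each step deletes at least one vertex and never $u_1$ or $u_2$, so after some $m\ge 2$ steps we reach a disk $D_m$ on exactly $3$ vertices, which is therefore the triangle on $u_1$, $u_2$, and a third vertex. Reading the sequence backwards gives the partition: put $L:=m+1$, $V_k:=V'_{L+1-k}$ for $2\le k\le L$ (so $V_L=V'_1=\{u_3\}$) and $V_1:=V(D_m)$. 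Then $G_k=D_{L-k}$ for every $k$; in particular $G_1=D_m$, and every $G_k$ with $1<k<L$ equals one of $D_1,\dots,D_{m-1}$, an internally $4$-connected triangulated disk on at least $4$ vertices, hence $3$-connected (it is standard that such disks are $3$-connected --- a chord would yield a $2$-separator, and a short check shows there is no other). By construction $V_k$ is a singleton or a fan for $1<k<L$.

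To see that $V_1$ has the required form $\{u_1,u_2,z\}$, I argue that the apex $z$ of the interior face incident to $(u_1,u_2)$ is the \emph{same} vertex in every $D_i$. Whenever a disk $D$ on at least $4$ vertices is peeled to $D'=D-V'$: the edge $(u_1,u_2)$ stays on the outer cycle of $D'$ (the set $V'$ is a contiguous arc of the outer cycle avoiding $u_1,u_2$), the apex of $D$ is an interior vertex of $D$ (since $D$ has no chord and at least $4$ vertices) and hence is not in $V'$, so $u_1u_2z$ is still a triangle of $D'$; being a triangle of the internally $4$-connected disk $D'$ it is a face, i.e.\ $z$ is again the apex in $D'$. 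Applying this along $D_0=G,D_1,\dots,D_{m-1}$, the apex $z$ of $G$ survives into $D_m$, so $V_1=V(D_m)=\{u_1,u_2,z\}$.

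Finally I verify that $\overline{G_k}$ is connected for $1<k<L$. Since $\overline{G_k}$ is induced on $V_k\cup\dots\cup V_L=V'_1\cup\dots\cup V'_{L+1-k}$, it suffices to prove by induction on $t$ that $H_t:=G[V'_1\cup\dots\cup V'_t]$ is connected. The base $H_1=\{u_3\}$ is trivial. For $t\ge 2$ let $v\in V'_t$; then $v$ lies on the outer cycle of $D_{t-1}$, and since $v\notin\{u_1,u_2,u_3\}$ it first appeared on some outer cycle at a step $i$ with $1\le i\le t-1$: either when $u_3$ was deleted (every newly exposed vertex is a neighbour of $u_3$), or in the ``basic'' case of Lemma~\ref{lem:find} (the one new vertex is the fan centre, adjacent to all of $V'_i$), or in the ``complex'' case (each new vertex is a neighbour of the deleted singleton $V'_i$). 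In every case $v$ has a neighbour in $V'_i\subseteq V(H_{t-1})$, so $H_t=G[V(H_{t-1})\cup V'_t]$ is connected, being the connected graph $H_{t-1}$ with each vertex of $V'_t$ hung onto it; taking $t=L+1-k$ completes the proof. The main obstacle is this last induction: one must read the two cases of Lemma~\ref{lem:find} carefully to control exactly which vertices ever lie on an outer cycle and to see that each attaches to the part already removed; a secondary point requiring care is the bookkeeping that $u_3$ is peeled first and that the apex $z$ is never peeled.
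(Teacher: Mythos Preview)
Your proof is correct and follows essentially the same strategy as the paper's: peel off $u_3$, then iterate Lemma~\ref{lem:find} until three vertices remain, and verify connectivity of $\overline{G_k}$ by showing that every vertex other than $u_3$ has a neighbour in an earlier-removed group, namely the one whose removal first exposed it on the outer face. Your write-up is more explicit than the paper's in tracking the apex $z$ to pin down $V_1$ and in casing on the two branches of Lemma~\ref{lem:find} for the connectivity step, but the underlying argument is the same.
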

\begin{proof}
We choose the vertex set in reverse order.    Let $\{u_1,u_2,u_3\}$ be
the outer-face and choose $V_L:=\{u_3\}$; this satisfies all conditions
since $u_3$ has at least 3 neighbors.  (We do not at this point know the
correct value of $L$, but simply assign indices backwards and shift
indices at the end so that the vertex sets are numbered $V_1,\dots,V_L$.)

Observe that $G-u_3$
is an internally 4-connected triangulated disk, because the neighbors
of $u_3$ form a simple cycle without chord (else there would be a separating
triangle at $u_3$).
Assume now some $V_{k+1},\dots,V_L$ have been chosen already such that
the remaining graph $G_k:=G-(V_{k+1}\cup \dots \cup V_L)$ is an
internally 4-connected triangulated disk with $(u_1,u_2)$ on the outer-face.
If $G_k$ has at least 4 vertices, then apply Lemma~\ref{lem:find} to find the next 
$V_k$.  
%All vertices in $V_k$ are on the outer-face of $G_k$, but not on
%the outer-face of $G$, and so have at least one neighbor in 
%$V_{k+1},\dots,V_L$ since $G$ is triangulated.  If $V_k$ contains a single
%vertex, then it has at least three neighbors in $G_k$, since any vertex
%of degree 2 would imply a chord in $G_k$ between its neighbors.  If
%$V_k$ contains multiple vertices, then these were chosen so that they
%form a fan in $G$.  So this is in fact a suitable set $V_k$.
Graph $G_k-V_k$ is again internally 4-connected, so 
we can continue choosing vertex sets until only 3 vertices, including
$u_1$ and $u_2$, are left.
Since the graph is still internally 4-connected, these vertices must
be a triangle, and hence a face of $G$.  So setting $V_1$ to be the
three vertices of this triangle gives the desired ordering.

To observe that the required connectivity holds, note that any
internally 4-connected graph is 3-connected since it is a
triangulated disk without a chord.  To see that $\overline{G_k}$
is connected, it suffices to show that every vertex except $u_3$
has a neighbor in a later vertex set; the set of these edges
then forms a spanning tree in $\overline{G_k}$.  The argument for
this is nearly the same as for $(2,1)$-orderings.  Clearly 
each of $u_1,u_2$ are adjacent to $u_3$.  For any vertex $z\neq u_1,u_2,u_3$,
vertex $z$ is not on the outer face of $G$, and hence there must
exist some minimal $k'$ such that $z$ is on the outer face of
$G_{k'-1}$, but not on the outer face of $G_{k'}$.  Since faces
are triangles, this implies that $z$ is adjacent to some vertex
in $V_{k'}$.  By the above hence $\overline{G_k}$ is connected for
any $1<k<L$.
\end{proof}

%\todo{Do we dare to write the following? Martin: Yes, we do. }
The proofs of the above results are constructive and lead
to polynomial time algorithms.  With suitable data structures
to keep track of 2-leg-centers, it is not hard to see that 
a $(3,1)$-canonical ordering can be found in linear time; we
omit the details.

%%%%%%%%%%%%%%%%%%%%%%%%%%%%%%%%%%%%%%%%%%%%%%%%%%%%%%%%%%%%%%%%%%%%%
\section{Applications}
\label{se:appl}
%%%%%%%%%%%%%%%%%%%%%%%%%%%%%%%%%%%%%%%%%%%%%%%%%%%%%%%%%%%%%%%%%%%%%

In this section, we demonstrate two uses for the $(3,1)$-canonical
ordering in graph drawing.  Both results proved here were known before,
but in our opinion the $(3,1)$-canonical ordering significantly simplifies
the proof of these results.

%----------------------------------------------------------------------
\subsection{Rectangular duals}
\label{sec:rd}
%----------------------------------------------------------------------

A {\em rectangular dual drawing} (or {\em RD-drawing} for short) of a 
planar graph $G$ consists of
a set of interior-disjoint rectangles assigned to the vertices of $G$ 
in such a way that the union of the rectangles forms a rectangle without
holes, and the rectangles assigned to vertices $v$ and $w$ touch in a
non-zero-length line segment if and only if $(v,w)$ is an edge.
The following theorem has been proved repeatedly:

\begin{theorem}[\cite{Ung53,Tho84,KH97}]
Let $G$ be a $4$-connected triangulation, and let $e$
be an edge on the outer-face of $G$.  Then $G-e$ has a rectangular
dual.
\end{theorem}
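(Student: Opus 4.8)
The plan is to read off a rectangular dual of $G-e$ directly from the $(3,1)$-canonical order $V_1\cup\dots\cup V_L$ of $G$ supplied by the previous theorem, where $V_1=\{u_1,u_2,z\}$ and $V_L=\{u_3\}$. By relabelling the outer triangle we may assume $e=(u_1,u_2)$. Deleting $e$ merges the outer face of $G$ with the interior face $u_1u_2z$, so the outer face of $G-e$ is the $4$-cycle $u_1\,u_3\,u_2\,z$; using $4$-connectivity one checks that neither diagonal $u_1u_2$ nor $u_3z$ is an edge of $G-e$ (if $z$ were adjacent to $u_3$ then $\{u_1,u_2,u_3,z\}$ would span a $K_4$, which for a planar triangulation with outer triangle $u_1u_2u_3$ forces $n=4$). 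I therefore aim for a rectangular dual of $G-e$ in which $u_1,u_3,u_2,z$ form the left, top, right and bottom sides of the bounding rectangle, built up one vertex set at a time in the order $V_1,\dots,V_L$.

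First I would extract from Lemma~\ref{lem:find} and the proof of the theorem the exact way each later set attaches to the outer cycle $C_{k-1}$ of $G_{k-1}$: a singleton $V_k=\{v\}$ is adjacent to a contiguous sub-path $p_0,p_1,\dots,p_{d+1}$ of $C_{k-1}$, and after insertion $p_1,\dots,p_d$ are interior to $G_k$ while $p_0,p_{d+1}$ remain on $C_k$; a fan $V_k=\{z_1,\dots,z_f\}$ with centre $x$ is adjacent to three consecutive vertices $c_i,x,c_j$ of $C_{k-1}$, each $z_t$ is adjacent to $x$, and only $x$ becomes interior to $G_k$. I would also record the behaviour of the poles: $u_1$ and $u_2$ stay on $C_k$ for every $k<L$ and are never interior vertices and never a fan centre, so they are always the extreme left, respectively right, vertex of any attachment that reaches them; $z$ lies on $C_1$, but $V_2$ is forced to be a fan with centre $z$ running from $u_1$ to $u_2$, so $z$ becomes interior exactly at step $2$; and $V_L=\{u_3\}$ is adjacent to every vertex of $C_{L-1}$.

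Then I would run the induction on $k$ with the following ``staircase'' invariant on a rectangular dual $R_k$ of $G_k-e$. Let the \emph{top-path} of $C_k$ be the sub-path of $C_k$ joining $u_1$ to $u_2$ and avoiding the edge $e$. The invariant is: $u_1$ occupies the whole left side of the drawn region and $u_2$ the whole right side, and they are at least as tall as every other rectangle; and the upper boundary of the drawn region is, from left to right, the top of $u_1$, then the exposed tops of the vertices of the top-path of $C_k$ in order, then the top of $u_2$, while every remaining rectangle lies strictly below this boundary. The base case $R_1$ consists of three rectangles --- tall columns $u_1,u_2$ and a wide low rectangle $z$ between them. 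For the step, I would (i)~stretch the currently exposed rectangles that $V_k$ will use upward to a common height (possible precisely because they are exposed from above), raising $u_1$ and $u_2$ slightly higher; (ii)~lay the rectangles of $V_k$ on top at that common height --- a singleton as one rectangle covering all interior attachment vertices and part of the two extreme ones, a fan as a left-to-right row of rectangles over $x$ with its ends flush against the rectangles of $c_i$ and $c_j$ --- widening the drawing as needed, and restore the invariant. Finally, placing $V_L=\{u_3\}$ as one wide rectangle filling the notch above everything between $u_1$ and $u_2$ turns the region into a genuine rectangle, which is the desired rectangular dual of $G-e$.

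The step I expect to cost the most effort is not conceptually deep but is the most error-prone: the coordinate bookkeeping in (i)--(ii), namely checking that no rectangle touches one that it should not --- in particular that the diagonal pairs $u_1u_2$ and $u_3z$ never become adjacent --- that no gap is opened, and that the staircase invariant is restored. The structural facts collected in the second paragraph are exactly what makes this go through, and because new rectangles are only ever stacked on the current staircase, it suffices to stretch already-exposed rectangles upward; previously placed rectangles never need to be moved. (One could instead turn the same ordering into a regular edge labelling of $G-e$ --- at each $V_k$, colour its ``horizontal'' attachment edges one colour and the edges running upward from its attachment the other --- and then quote the classical equivalence between regular edge labellings and rectangular duals; but the direct construction above is self-contained.)
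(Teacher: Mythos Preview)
Your overall strategy is exactly the paper's: build the rectangular dual incrementally along the $(3,1)$-canonical order, one vertex group at a time.  The difference lies in the invariant you maintain, and there your description has a genuine gap.

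The paper keeps the much simpler \emph{flat-top} invariant: at every stage the rectangles of all current outer-face vertices $c^k_1,\dots,c^k_{\ell_k}$ reach the \emph{same} top line, in that left-to-right order.  To add $V_{k+1}$ with leftmost/rightmost attachments $c^k_a,c^k_b$, the paper raises the rectangles of the \emph{non}-covered vertices $c^k_1,\dots,c^k_a$ and $c^k_b,\dots,c^k_{\ell_k}$ by one unit, then drops the new rectangle(s) into the resulting gap.  Thus the new rectangle touches $c^k_a$ and $c^k_b$ on their \emph{sides}, never on their tops, and every outer-face rectangle remains fully exposed above.  Nothing ever needs to be stretched again.

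Your staircase invariant instead has the new singleton cover ``part of the two extreme ones'' from above.  That partial covering is precisely what breaks step~(i) later: once the top of $c^k_a$ has been partly covered by some earlier singleton, it is no longer a rectangle that can be ``stretched upward to a common height'' in a subsequent step without either overlapping the rectangle sitting on it or ceasing to be a rectangle.  Yet $c^k_a$ stays on the outer face and may well become an \emph{interior} attachment vertex for some later $V_{k'}$, at which point your step~(i) asks to raise it.  Your remark that ``previously placed rectangles never need to be moved'' is therefore not compatible with the stretching you describe.

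The fix is exactly the paper's: raise the complement instead of the attachment, so that the new rectangle meets the extreme attachments laterally and the top stays flat.  With that change your proof collapses to the paper's.  (Your side observations---that $V_2$ must be a fan centred at $z$ when $n\ge 5$, and that the outer face of $G-e$ is a diagonal-free $4$-cycle---are correct, but the paper's argument does not need them.)
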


Previous proofs on this result usually used the $(2,2)$-canonical
ordering (or some equivalent characterization, such as regular
edge labellings).  We give here a different proof using the
$(3,1)$-canonical ordering.

\begin{proof}
Let the outer-face be $\{u_1,u_2,u_3\}$, chosen such that $e=(u_1,u_2)$.
Find a $(3,1)$-canonical ordering $V_1\cup \dots\cup V_L$ of $G$.  We
now build the rectangular-dual drawing of $G-e$ by drawing $G_k-e$ for 
$k=1,\dots,L$.
By construction, $e=(u_1,u_2)$ is an edge on the outer-face of $G_k$, and
we can hence enumerate the outer-face of $G_k$ as $c^k_1,\dots,c^k_{\ell_k}$
with $c^k_1=u_1$ and $c^k_{\ell_k}=u_2$.  
We maintain the invariant
that in the RD-drawing of $G_k$, the rectangles of 
$c^k_1,\dots,c^k_{\ell_k}$ all attach at the top side of the bounding
box, in this order.

\begin{figure}[ht]
\hspace*{\fill}
\includegraphics[width=40mm,page=1]{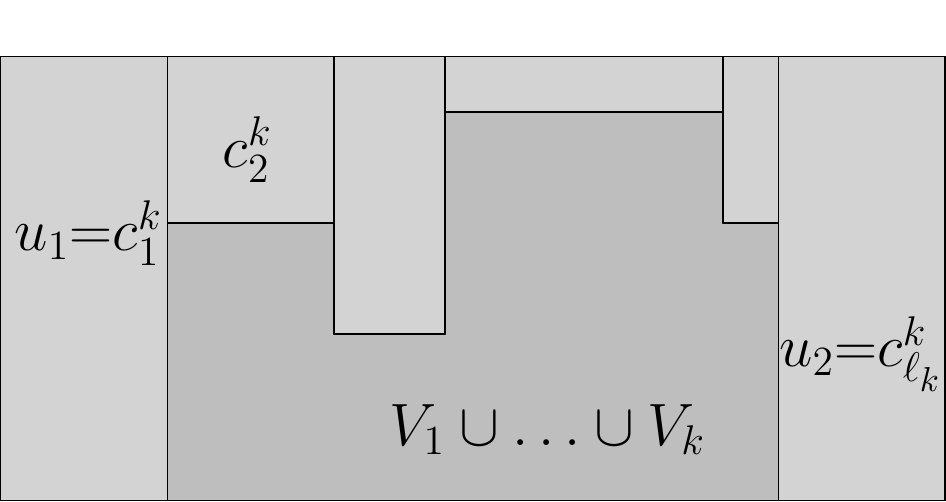}
\hspace*{\fill}
\includegraphics[width=40mm,page=2]{RD.pdf}
\hspace*{\fill}
\includegraphics[width=40mm,page=3]{RD.pdf}
\hspace*{\fill}
\caption{(Left) The invariant.  (Middle and right) 
Adding a singleton and a fan.  }
\label{fig:RD}`
\end{figure}

Such a drawing is easily created for $G_1-e$, since $G_1$ is a triangle and
so $G_1-e$ is a path $u_1-z-u_2$, where $z$ is the third vertex
of the interior face at $(u_1,u_2)$.
Now assume $G_k$ is drawn and consider adding either
a singleton or a fan $V_{k+1}$.  Let $a$ and $b$ be the smallest and 
largest index such that $c^k_a$ and $c^k_b$ are adjacent to a vertex in $V_{k+1}$.  

Extend all rectangles of $c^k_1,\dots,c^k_a$ and $c^k_b,\dots,c^k_{\ell_k}$ upward
by one unit.    This leaves a ``gap'' where the rectangles
of $c^k_{a+1},\dots,c^k_{b-1}$ ended. There is at least one such rectangle
since $b\geq a+2$ by properties of the $(3,1)$-canonical ordering (else $G_{k+1}$ would not be 3-connected).  
If $V_{k+1}$ is a singleton $z$, then we insert the rectangle for $z$ into
this gap.  If $V_{k+1}$ is a fan $\{z_1,\dots,z_f\}$, then $b=a+2$
and so the gap consists exactly of the top of $c^k_{a+1}$.  Split this
range into $f$ pieces and assign rectangles for 
$z_1,\dots,z_f$ in this place.  One easily verifies that this represents
all added edges as contacts and satisfies the invariant.
So we have the desired RD-drawing.
\end{proof}

%----------------------------------------------------------------------
\subsection{Rectangle-of-influence drawings}
\label{sec:ri-drawing}
%----------------------------------------------------------------------

A planar straight-line drawing of a graph is called a {\em (weak, closed)
rectangle-of-influence drawing} (or {\em RI-drawing} for short) if
for any edge $(u,v)$ the {\em rectangle $R(u,v)$ defined by $u,v$} is
{\em empty}, i.e., contains no other points of vertices of the graph.  (It
may contain parts of other edges.)  Here,  $R(u,v)$ is the minimum
axis-aligned rectangle that contains the points of $u$ and $v$; it 
degenerates into a line segment if $u$ or $v$ are on a horizontal or
vertical line.  The following result is known:

\begin{theorem}[\cite{BBM-GD99}]
\label{thm:ri}
Let $G$ be a $4$-connected triangulation and let $e$ be one
edge of the outer-face.  Then $G-e$ has a (weak, closed) 
rectangle-of-influence drawing.
\end{theorem}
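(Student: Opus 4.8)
The plan is to mimic the inductive, $(3,1)$-canonical-order-driven construction used for the rectangular-dual result, but now building a straight-line RI-drawing rather than a contact representation. First I would fix the outer-face $\{u_1,u_2,u_3\}$ with $e=(u_1,u_2)$, compute a $(3,1)$-canonical order $V_1\cup\dots\cup V_L$, and process the sets in order, maintaining at each stage a straight-line RI-drawing of $G_k-e$ in which the outer-face vertices $c^k_1=u_1,\dots,c^k_{\ell_k}=u_2$ lie in this left-to-right order along an ``upper staircase'' — concretely, I would keep the outer path $x$-monotone and $y$-monotone from $u_1$ down to some low point and back up to $u_2$, or simply keep it strictly $x$-monotone with the outer vertices being the topmost points of the current drawing. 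The base case $G_1-e$ is the path $u_1-z-u_2$, placed as a ``tent''. The invariant to carry is twofold: (i) the drawing so far is a valid RI-drawing, and (ii) the outer path is monotone enough that any new vertex placed strictly above it, just over the segment it attaches to, sees empty rectangles to all its new neighbors and does not destroy emptiness of any old rectangle.

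The induction step splits on whether $V_{k+1}$ is a singleton $z$ or a fan $\{z_1,\dots,z_f\}$. Let $c^k_a,c^k_b$ be the leftmost and rightmost outer neighbors of $V_{k+1}$; by the $(3,1)$-property $b\ge a+2$, so $V_{k+1}$ covers the consecutive outer vertices $c^k_{a+1},\dots,c^k_{b-1}$, all of which disappear from the outer face. For a singleton $z$ adjacent to $c^k_a,c^k_{a+1},\dots,c^k_b$ (these are consecutive on the outer face since $G$ is a triangulated disk and $z$'s neighborhood among them is a path), I would place $z$ high above the ``bump'' formed by $c^k_{a+1},\dots,c^k_{b-1}$ — high enough and with $x$-coordinate chosen inside the $x$-span of $c^k_a,c^k_b$ — so that each rectangle $R(z,c^k_i)$ contains, among vertices, only $z$ and $c^k_i$; this uses that $c^k_{a},\dots,c^k_b$ are the topmost vertices below $z$ and are $x$-ordered, and that everything else in the drawing lies strictly below the old outer path. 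For a fan, the $(3,1)$-order forces $b=a+2$, i.e., the fan sits on the single outer vertex $c^k_{a+1}$ with common neighbor on either side being $c^k_a$ and $c^k_{a+2}$; here the $z_i$'s are placed in a small monotone arc above $c^k_{a+1}$, left to right, each seeing $c^k_a$, $c^k_{a+1}$ (for the appropriate ones) and $c^k_{a+2}$, and also seeing the fan-neighbors $z_{i-1},z_{i+1}$ because the arc is chosen $x$- and $y$-monotone. In both cases I would then argue the updated outer path — $c^k_1,\dots,c^k_a$, then the new vertices, then $c^k_b,\dots,c^k_{\ell_k}$ — is again monotone in the required sense, re-establishing the invariant, and that no previously-empty rectangle $R(u,v)$ with both endpoints old has acquired a new vertex, since all new vertices lie strictly above the old outer path while $R(u,v)$ lies weakly below it.

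The main obstacle I expect is making the geometric placement precise enough that emptiness is genuinely preserved for all edges simultaneously, not just the newly added ones: when I lift $z$ above the bump I must ensure $R(z,c^k_a)$ and $R(z,c^k_b)$ do not swallow any of the intermediate old outer vertices $c^k_{a+1},\dots,c^k_{b-1}$ that are being buried, which forces $z$'s $x$-coordinate to lie in the open interval between those of $c^k_a$ and $c^k_b$ — yet simultaneously the edges $(z,c^k_{a+1}),\dots,(z,c^k_{b-1})$ must have empty rectangles too, and their union of rectangles must not contain $c^k_a$ or $c^k_b$. Reconciling these constraints will likely require choosing coordinates with enough ``vertical slack'' and possibly perturbing $x$-coordinates of the buried vertices downward/outward so that only the true edges of $z$ have empty rectangles; a clean way is to assign, inductively, $x$-coordinates so that the outer path is strictly convex (a lower-convex chain) and $y$-coordinates large for later vertices, so that for any edge the defining rectangle is ``pinned'' between the chain and the new apex. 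I would also need to handle degenerate rectangles (collinear vertices) by the ``weak, closed'' allowance in Theorem~\ref{thm:ri}, and verify the two suppressed outer edges incident to $u_3$ and the removed edge $e=(u_1,u_2)$ cause no trouble. Once the coordinate scheme is fixed, the verification for singletons and fans is a short case check analogous to the RD-drawing proof.
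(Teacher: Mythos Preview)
Your overall plan---induct along a $(3,1)$-canonical order, maintain a geometric invariant on the outer face of $G_k-e$, and handle singletons and fans separately---matches the paper exactly. The gap is that you never commit to the right invariant, and the candidates you float (a ``tent'' shape, ``$x$-monotone with outer vertices topmost'', new vertices placed ``high above the bump'', a lower-convex chain) are precisely what generates the swallowing problems you anticipate in your last paragraph.

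The paper's invariant is that the outer-face path $c^k_1,\dots,c^k_{\ell_k}$ is a \emph{strictly monotone staircase in both coordinates}:
\[
x(c^k_1)<x(c^k_2)<\dots<x(c^k_{\ell_k}) \quad\text{and}\quad y(c^k_1)>y(c^k_2)>\dots>y(c^k_{\ell_k}).
\]
With this invariant the new vertex is \emph{not} placed high above anything. A singleton $z$ with outer neighbours $c^k_a,\dots,c^k_b$ is given $x(z)$ strictly between $x(c^k_{b-1})$ and $x(c^k_b)$, and $y(z)$ strictly between $y(c^k_{a+1})$ and $y(c^k_a)$. This single choice (i)~restores the staircase on the new outer face $\dots,c^k_a,z,c^k_b,\dots$, and (ii)~dissolves your swallowing worry automatically: every buried vertex $c^k_j$ with $a<j<b$ satisfies both $x(c^k_j)<x(z)$ and $y(c^k_j)<y(z)$, so it lies strictly left of $R(z,c^k_b)$ and strictly below $R(z,c^k_a)$; and for $a\le j<j'\le b$ the staircase gives $x(c^k_{j})<x(c^k_{j'})$ while $y(c^k_j)>y(c^k_{j'})$, so no $R(z,c^k_j)$ can contain another $c^k_{j'}$. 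A fan (where necessarily $b=a+2$) is handled by spreading $z_1,\dots,z_f$ monotonically in both coordinates inside the rectangle $(x(c^k_{a+1}),x(c^k_{a+2}))\times(y(c^k_{a+1}),y(c^k_a))$. No vertical slack, perturbation, or convexity argument is needed once the staircase invariant is in place.
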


We re-prove this result using the $(3,1)$-canonical ordering.  We note
here that the drawing created is exactly the same as in \cite{BBM-GD99};
the difference lies in that we can find the next vertex set to add much
more easily with the $(3,1)$-canonical ordering.

\begin{proof}
Let the outer-face be $\{u_1,u_2,u_3\}$, chosen such that $e=(u_1,u_2)$.
Find a $(3,1)$-canonical ordering $V_1\cup \dots\cup V_L$ of $G$.  We
now build the RI-drawing of $G-e$ by drawing $G_k-e$ for $k=1,\dots,L$.
By construction $e=(u_1,u_2)$ is an edge on the outer-face of $G_k$, and
we can hence enumerate the outer-face of $G_k$ as $c^k_1,\dots,c^k_{\ell_k}$
with $c^k_1=u_1$ and $c^k_{\ell_k}=u_2$.  
We maintain the invariant
that in the RI-drawing of $G_k$
$$x(c^k_1)<x(c^k_2)< \dots < x(c^k_{\ell_k}) \quad \mbox{and} \quad
y(c^k_1)>y(c^k_2)> \dots > y(c^k_{\ell_k}).$$
\begin{figure}[ht]
\hspace*{\fill}
\includegraphics[width=40mm,page=1]{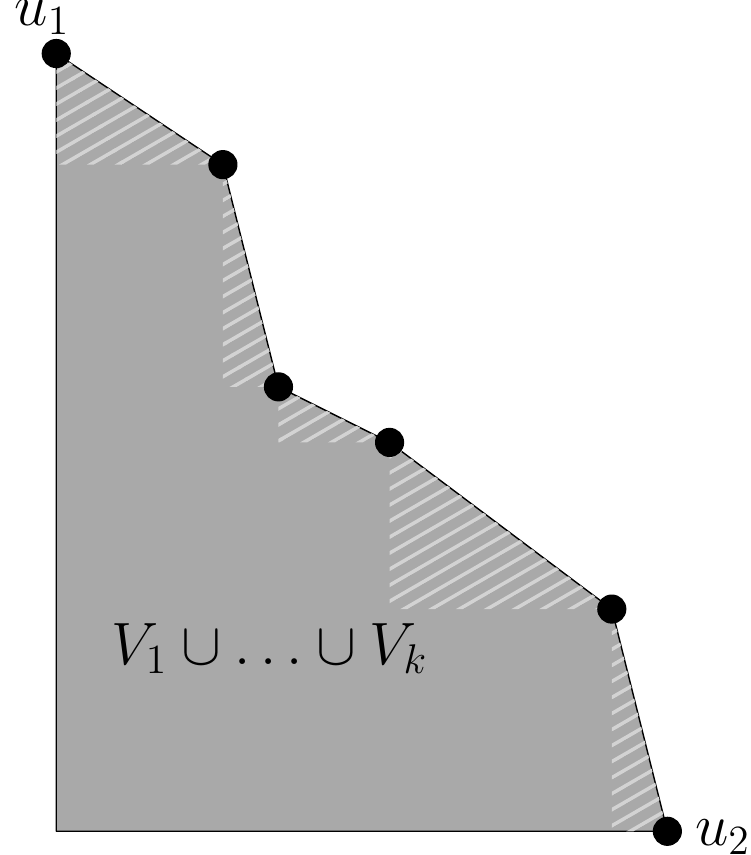}
\hspace*{\fill}
\includegraphics[width=40mm,page=2]{RI.pdf}
\hspace*{\fill}
\includegraphics[width=40mm,page=3]{RI.pdf}
\hspace*{\fill}
\caption{(Left) The invariant for RI-drawings.  Hatched regions contain no points due to the RI-drawing.  (Middle and right) Adding a singleton
and a fan.  The light gray region contains the new rectangles of influence.}
\label{fig:RI}`
\end{figure}

Such a drawing is easily created for $G_1-e$, since $G_1$ is a triangle and
so $G_1-e$ is a path $u_1-z-u_2$, where $z$ is the third vertex
of the interior face at $(u_1,u_2)$.
Now assume $G_k$ is drawn and consider adding either
a singleton or a fan $V_{k+1}$.  Let $a$ be the smallest and $b$ be the
largest index such that $c^k_a$ and $c^k_b$ are adjacent to a vertex in $V_{k+1}$.
By 3-connectivity of $G_{k+1}$ we have $b\geq a+2$.
If $V_{k+1}$ is a singleton $z$, then 
define 
$$x(z)= \frac{1}{2}\left(x(c^k_{b-1})+x(c^k_b)\right)$$
%\todo{I did this because of line overflows.}
and $$y(z)= \frac{1}{2}\left(y(c^k_{a})+x(c^k_{a+1})\right).$$  See also
Figure~\ref{fig:RI}(middle).  By $a\leq b-2$ adding this new point satisfies
the invariant.
All rectangles $R(z,c^k_j)$ are empty for $a\leq j\leq b$, because they do not intersect
the drawing of $G_k$ except in rectangles $R(c^k_a,c^k_{a+1})$ and $R(c^k_{b-1},c^k_b)$.
So we have the desired RI-drawing.

If $V_{k+1}$ is a fan $\{z_1,\dots,z_f\}$, then $b=a+2$.
For $h=1,\dots,f$, define $$x(z_h)= \frac{h}{f+1}\left(x(c^k_{b-1})+x(c^k_b)\right)$$
and $$y(z_h)= \frac{f-h+1}{f+1}\left(y(c^k_{a})+x(c^k_{a+1})\right).$$  See also
Figure~\ref{fig:RI}(right).  By $a=b-2$ adding these new points satisfies the invariant.
All rectangles $R(z_h,c^k_j)$ are empty for $a\leq j\leq b$, because they do not intersect
the drawing of $G_k$ except in rectangles $R(c^k_a,c^k_{a+1})$ and $R(c^k_{b-1},c^k_b)$.
So we have the desired RI-drawing.
\end{proof}

%%%%%%%%%%%%%%%%%%%%%%%%%%%%%%%%%%%%%%%%%%%%%%%%%%%%%%%%%%%%%%%%%%%%%
\section{Conclusion}
\label{se:concl}
%%%%%%%%%%%%%%%%%%%%%%%%%%%%%%%%%%%%%%%%%%%%%%%%%%%%%%%%%%%%%%%%%%%%%

We showed the existence of new canonical order for $4$-connected triangulations. We used this canonical order to give simplified proofs of some previously known graph drawing results for 4-connected triangulations. 
%Our suggestion for future work is revisiting other results where canonical orders of graphs were used and attempting to strengthen these using this new canonical ordering. 
Furthermore, we provided provided a brief survey of canonical  orderings for planar graphs and laid the groundwork for their further investigation.   Of particular interest to us are the following questions:
\begin{itemize}
\item Does every planar $c$-connected triangulation have an $(r,s)$-canonical ordering for all $r+s=c$ and reasonable restrictions
	on vertex sets  $V_k$? 
%Is it enough to have singletons and (for $c\geq 4$) fans, or do we need larger 
%	structures?  
The missing case is a $(4,1)$-canonical ordering for 5-connected triangulations.

\item The $(r,s)$-canonical ordering definition naturally generalizes to planar graphs that are not necessarily triangulated.
	For the corresponding $(2,1)$-orderings \cite{Kant96} and $(2,2)$-orderings \cite{NRN97}
	it suffices to allow adding {\em chains}, i.e., induced paths.  Are there $(3,1)$-orderings, $(3,2)$-orderings and $(4,1)$-orderings 
	for 4-connected/5-connected planar graphs with some simple subgraphs as vertex sets $V_k$?  Likewise, exploration of
	$(r,s)$-canonical orders for non-planar graphs for $r+s\geq 5$ remains completely open.
\end{itemize}

\iffalse
\comment{Random things from Therese:
\begin{itemize}
\item The $(r,s)$-ordering for non-triangulated is not quite equivalent to the existing orders:  for the
$(2,1)$-ordering, it would be ok to have a chain with only two outgoing edges,
while Kant's ordering every vertex must have outgoing edges.

Perhaps this works if we throw ``internally $(r+1)$-connected'' and ``internally $(s+1)$-connected''
into the mix, but then we have to define those terms precisely.  (What is internally 5-connected?)
\item Can we do $(3,1)$-orderings for non-triangulated?  I've worked a
bit on this, and have real trouble with the face near $(u_1,u_2)$.  Since
that's now not a face, it could turn into a big cycle, and then that's
not 3-connected if we start with this.  We either need to start with a
Halin-graph (ugh), or make some exceptions to the rules near $v_1,v_2$ (also ugh).
\item Text from Martin: We suggested the existence of $(4,1)$-canonical ordering for 5-connected planar graphs. We have a proof of such a canonical ordering and it will be contained in a journal publication of this work, which is currently in preparation.  Comment from Therese:
I am not at all convinced that this is really correct until I've seen a fully written-up proof.  In particular, we don't really have 4-connected, we only have internally 4-connected.  This would need extensive discussion in the paper as to why this is ok (is it?). Summary: I don't want this sentence in the paper.  (It's also 
bad style to refer to something that doesn't exist at least in an ArXiV.)
\end{itemize}
}
\fi

\bibliographystyle{alpha}
\bibliography{therese}

\begin{thebibliography}{dFPP90}

\bibitem[BBM99]{BBM-GD99}
T.~Biedl\postdoc{}, A.~Bretscher\student{}, and H.~Meijer.
\newblock Rectangle of influence drawings of graphs without filled 3-cycles.
\newblock In {\em Graph Drawing (GD'99)}, volume 1731 of {\em LNCS}, pages
  359--368. Springer-Verlag, 1999.

\bibitem[CLY05]{CLY05}
Sean Curran, Orlando Lee, and Xingxing Yu.
\newblock Chain decompositions of 4-connected graphs.
\newblock {\em {SIAM} J. Discrete Math.}, 19(4):848--880, 2005.

\bibitem[CN98]{CN98}
Marek Chrobak and Shin-Ichi Nakano.
\newblock Minimum-width grid drawings of plane graphs.
\newblock {\em Comput. Geom.}, 11(1):29--54, 1998.

\bibitem[dFPP90]{FPP90}
H.~de~Fraysseix, J.~Pach, and R.~Pollack.
\newblock How to draw a planar graph on a grid.
\newblock {\em Combinatorica}, 10:41--51, 1990.

\bibitem[Die12]{Die12}
R.~Diestel.
\newblock {\em Graph Theory, 4th Edition}, volume 173 of {\em Graduate texts in
  mathematics}.
\newblock Springer, 2012.

\bibitem[HKL99]{HKL99}
Xin He, Ming-Yang Kao, and Hsueh-I Lu.
\newblock Linear-time succinct encodings of planar graphs via canonical
  orderings.
\newblock {\em SIAM J. Discrete Math.}, 12(3):317--325, 1999.

\bibitem[Kan96]{Kant96}
G.~Kant.
\newblock Drawing planar graphs using the canonical ordering.
\newblock {\em Algorithmica}, 16:4--32, 1996.

\bibitem[KH97]{KH97}
Goos Kant and Xin He.
\newblock Regular edge labeling of {$4$}-connected plane graphs and its
  applications in graph drawing problems.
\newblock {\em Theoret. Comput. Sci.}, 172(1-2):175--193, 1997.

\bibitem[NN00]{NN00}
Sayaka Nagai and Shin{-}Ichi Nakano.
\newblock A linear-time algorithm to find independent spanning trees in maximal
  planar graphs.
\newblock In Ulrik Brandes and Dorothea Wagner, editors, {\em Graph-Theoretic
  Concepts in Computer Science, 26th International Workshop, {WG} 2000,
  Konstanz, Germany, June 15-17, 2000, Proceedings}, volume 1928 of {\em
  Lecture Notes in Computer Science}, pages 290--301. Springer, 2000.

\bibitem[NRN97]{NRN97}
Shin{-}Ichi Nakano, Md.~Saidur Rahman, and Takao Nishizeki.
\newblock A linear-time algorithm for four-partitioning four-connected planar
  graphs.
\newblock {\em Inf. Process. Lett.}, 62(6):315--322, 1997.

\bibitem[Sch14]{Schmidt-ICALP14}
Jens~M. Schmidt.
\newblock The mondshein sequence.
\newblock In {\em Automata, Languages, and Programming (ICALP 2014)}, volume
  8572 of {\em Lecture Notes in Computer Science}, pages 967--978. Springer,
  2014.

\bibitem[Tho84]{Tho84}
C.~Thomassen.
\newblock Plane representations of graphs.
\newblock In {\em Progress in Graph Theory}, pages 43--69. Academic Press,
  1984.

\bibitem[Ung53]{Ung53}
Peter Ungar.
\newblock On diagrams representing maps.
\newblock {\em J. London Mathematical Society}, 28(3):336--342, 1953.

\end{thebibliography}
% ignore the one below; this is used to get the bibtex-entries from my personal bib-files
% sorry for the nasty-looking includes; the links don't seem to work?
%\bibliographystyle{extract}
%\bibliography{../../../../Documents/WorkActive/bib/gd,../../../../Documents/WorkActive/bib/papers,../../../../Documents/WorkActive/bib/full}

\end{document}